\documentclass[letterpaper, 10 pt, conference, onecolumn]{ieeeconf}

\IEEEoverridecommandlockouts       
\overrideIEEEmargins            








\usepackage{cite}	     	
\usepackage{textcomp}   	
\usepackage[euler]{textgreek}
\usepackage{lipsum}	     	
\usepackage{soul, color}   	
\usepackage[svgnames]{xcolor} 
\usepackage[utf8]{inputenc} 
\usepackage[T1]{fontenc}    
\usepackage[shortlabels]{enumitem}  		
\usepackage[normalem]{ulem}
\usepackage{eurosym}

\usepackage{amsmath}    	
\usepackage{amssymb}		
\usepackage{amsfonts}		

\usepackage{amsthm}    	    
\usepackage{mathtools}		
\usepackage{newtxmath} 	    
\usepackage{cases}     	
\usepackage{bm}             

\usepackage[caption=false,font=footnotesize]{subfig}
\usepackage{graphicx}		
\usepackage{booktabs}   	
\usepackage{siunitx}    
\usepackage[export]{adjustbox}  
\usepackage{floatrow}		

\usepackage{hyperref}		


\graphicspath{{figures/}}   
\setlist{noitemsep} 		
\hypersetup{			
	colorlinks,
	linkcolor={DodgerBlue},
	citecolor={DodgerBlue},
	urlcolor={DodgerBlue}
}
\allowdisplaybreaks


\newcommand{\T}{^{\mathsf{T}}}

\newcommand{\B}[1]{\if#1\relax\bm{#1}\else\mathbf{#1}\fi} 
\newcommand{\R}[1]{\mathrm{#1}}						      
\newcommand{\C}[1]{\mathcal{#1}}	
\newcommand{\BB}[1]{\mathbb{#1}}

\newcommand{\norm}[1]{\left\lVert #1 \right\rVert}


\newtheoremstyle{mythmstyle}
{1ex}
{1ex}
{\itshape}
{}
{\bfseries}
{.}
{ }
{\thmname{#1}\thmnumber{ #2}\thmnote{ (#3)}}  

\newtheoremstyle{myremark} 
{}
{}
{}
{}
{\bfseries}
{.}
{.5em}
{}

\theoremstyle{mythmstyle}			

\newtheorem{proposition}{Proposition}

\newtheorem{lemma}{Lemma}

\theoremstyle{myremark} 


\title{Intermittent non-pharmaceutical strategies to mitigate the COVID-19 epidemic in a network model of Italy via constrained optimization}

\author{%
Marco Coraggio\textsuperscript{*, 1},
Shihao Xie\textsuperscript{*, 2}, 
Francesco De Lellis\textsuperscript{1}, 
Giovanni Russo\textsuperscript{\#, 3}, 
Mario di Bernardo\textsuperscript{\#, 1}%
\thanks{\textsuperscript{*, \#}These authors contributed equally.}
\thanks{\textsuperscript{1}Department of Electrical Engineering and Information Technology, University of Naples Federico II, Italy (email: \{marco.coraggio, francesco.delellis, mario.dibernardo\}@unina.it).}
\thanks{\textsuperscript{2}School of Electrical and Electronic Engineering, University College Dublin (email: shihao.xie1@ucdconnect.ie).}
\thanks{\textsuperscript{3}Department of Information and Electric Engineering and Applied Mathematics, University of Salerno (email: giovarusso@unisa.it).}
\thanks{This work was supported by the Research
Project PRIN 2017 ``Advanced Network Control of Future Smart Grids'' funded by the Italian Ministry of University and Research (2020–2023)--http://vectors.dieti.unina.it.
The work of Shihao Xie was supported by the Science Foundation Ireland (SFI) under Grant 16/IA/4610.}%
}

\begin{document}

\maketitle
\thispagestyle{empty}
\pagestyle{empty}


\begin{abstract}
This paper is concerned with the design of intermittent non-pharmaceutical strategies to mitigate the spread of the COVID-19 epidemic exploiting network epidemiological models.
Specifically, by studying a variational equation for the dynamics of the infected, we derive, using contractivity arguments, a condition that can be used to guarantee that the effective reproduction number is less than unity.
This condition (i) is easily computable, (ii) is interpretable, being directly related to the model parameters, and (iii) can be used to enforce a scalability condition that prohibits the amplification of disturbances within the network system. 
We then include satisfaction of such a condition as a constraint in a Model Predictive Control problem so as to mitigate (or suppress) the spread of the epidemic while minimizing the economic impact of the interventions. 
A data-driven model of Italy as a network of three macro-regions (North, Center, and South), whose parameters are  identified from real data, is used to illustrate and evaluate the effectiveness of the proposed control strategy.
\end{abstract}



\section{Introduction}%
\label{sec:introduction}%
Because of recent events, the design of control strategies aimed at mitigating the spread of epidemic processes has suddenly become a major pressing  research problem with clear societal and economic impacts \cite{bin2020postlockdown, Kissler860}. 
Among these strategies, non-pharmaceutical interventions (NPIs) are essential to mitigate or suppress the epidemic (see \cite{imperial_report9,brauner2021inferring} for further details on the definition and difference between mitigation and suppression).
In this context, two key challenges are: (i) devising proxies to detect the onset of the epidemic spreading; (ii) synthesizing control strategies that balance the need of containing the epidemic with the need of reducing its economic impact.

\subsubsection*{Related work}

A fast intermittent lockdown strategy was proposed in \cite{bin2020postlockdown} with the goal of alleviating the effects of measurement uncertainty. 
Following a similar argument, a cyclic periodic lockdown strategy was studied in \cite{Karin2020.04.04.20053579}, whereas a stepping down policy was presented in \cite{KENNEDY2020104440}.
The use of feedback to dynamically adapt the intervention strategies was recently suggested in several papers.
For example, in \cite{9099929}, several feedback intervention strategies are presented, whereas in \cite{dellarossa2020network}
the benefits were shown of adopting feedback intermittent regional control strategies triggered by occupancy levels of the intensive care units (ICUs) in each region. 
Furthermore, several epidemic control strategies were recently proposed, based on model predictive control (MPC). 
These include \cite{morato2020optimal-1}, where MPC is used to decide whether to enforce social distancing in a region/country of interest, using a SIRASDC model ($S$: susceptible, $I$: infected, $R$: recovered, $A$: asymptomatic, $S$: symptomatic, $D$: dead, $C$: with control) and including hard constraints on ICUs occupancy and minimal dwell time of the control. 
MPC was also implemented on a number of different models, e.g., a SIDARTHE model 
($S$: susceptible, $I$: infected, $D$: diagnosed, $A$: ailing, $R$: recognized, $T$: threatened, $H$: healed, $E$: extinct)
in \cite{kohler2020robust}, a SEAIR model 
($S$: susceptible, $E$: exposed, $A$: asymptomatic infected, $I$: confirmed infections, $R$: recovered)
in \cite{Tsay}, and a multi-region SIRQTHE network model 
($S$: susceptible, $I$: infected, $R$: recovered, $Q$: quarantined, $T$: threatened, $H$: healed, $E$: extinct) 
in \cite{CARLI2020373}.

\subsubsection*{Contribution}
We consider network epidemic models, and give a sufficient condition, directly dependent on the model parameters, ensuring that the epidemiological \emph{effective reproduction number} $\mathcal{R}_t$ is less than one and that the infected population decreases to zero.
Then, we consider a network model of Italy parametrized on real data, and obtain a set of intermittent non-pharmaceutical intervention strategies at the regional level by formulating a constrained optimization problem that is effectively solved via MPC.
The formulation (i) embeds our novel stability condition as a constraint, (ii) is based on the minimization of an economic cost induced by both social distancing and testing and (iii) considers a lower bound on the control dwell time.
To the best of our knowledge, we are the first to implement the MPC on a regional model including all these features.

\section*{Mathematical Preliminaries}
We denote by $\C{I}$ the identity matrix of appropriate dimensions, by $\B{0}$ and $\B{1}$ column vectors of appropriate dimensions consisting of all zeros and ones, respectively; $\left\lVert \cdot \right\rVert_q$ denotes the vector $\ell_q$-norm and its induced matrix norm.
$\BB{R}_{\succeq 0}^n$ is the positive orthant in $\BB{R}^n$ (including the axes); $\BB{N}_{\ge 0}$ and $\BB{N}_{> 0}$ are the sets of semipositive and positive natural numbers, respectively.
Letting $\B{A}$ be a matrix, $\B{A} \succeq 0$ indicates that all its elements $A_{ij} \ge 0$ (analogously for $\succ$, $\preceq$, $\prec$).
\begin{lemma}[\cite{horn2012matrix}, p. 368]\label{lem:bounds_matrices_norms} 
Let $\B{B}_1, \B{B}_2 \in \BB{R}^{n \times n}$, with $n \in \BB{N}_{>0}$.
If $\B{B}_1 \succeq 0$, $\B{B}_2 \succeq 0$, and $\B{B}_1 \preceq \B{B}_2$, then $\norm{\B{B}_1} \le \norm{\B{B}_2}$.
\end{lemma}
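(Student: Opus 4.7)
The plan is to combine two monotonicity facts and the definition of the induced matrix norm. The first fact is that each $\ell_q$-norm is monotone on the nonnegative orthant: if $\B{0} \preceq \B{a} \preceq \B{b}$, then $\norm{\B{a}}_q \le \norm{\B{b}}_q$. The second is the element-wise inequality $|\B{B}_1 \B{x}| \preceq \B{B}_2 |\B{x}|$ for every $\B{x} \in \BB{R}^n$, where $|\cdot|$ denotes component-wise absolute value; this follows from a row-by-row triangle inequality combined with the hypotheses $\B{0} \preceq \B{B}_1 \preceq \B{B}_2$.

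First I would establish the element-wise dominance: for each row index $i$,
\[
|(\B{B}_1 \B{x})_i| \;=\; \Bigl|\sum_j (B_1)_{ij}\, x_j\Bigr| \;\le\; \sum_j (B_1)_{ij}\, |x_j| \;\le\; \sum_j (B_2)_{ij}\, |x_j| \;=\; (\B{B}_2 |\B{x}|)_i,
\]
where the first inequality uses the triangle inequality together with $(B_1)_{ij} \ge 0$ (which lets us drop the absolute value on the coefficients) and the second uses $(B_1)_{ij} \le (B_2)_{ij}$.

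Next, for any $\B{x}$ with $\norm{\B{x}} = 1$, I would chain the two facts:
\[
\norm{\B{B}_1 \B{x}} \;=\; \norm{|\B{B}_1 \B{x}|} \;\le\; \norm{\B{B}_2 |\B{x}|} \;\le\; \norm{\B{B}_2}\,\norm{|\B{x}|} \;=\; \norm{\B{B}_2},
\]
using monotonicity of the $\ell_q$-norm on the positive orthant in the first inequality, submultiplicativity of the induced norm in the second, and $\norm{|\B{x}|} = \norm{\B{x}}$ throughout. Taking the supremum of the left-hand side over the unit sphere yields $\norm{\B{B}_1} \le \norm{\B{B}_2}$.

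The only delicate point — hardly an obstacle, since the result is cited as standard from Horn and Johnson — is making sure the inequalities are read element-wise (as specified in the Mathematical Preliminaries) rather than in the Loewner sense, and invoking the correct monotonicity property of $\ell_q$-norms on nonnegative vectors. Nonnegativity of both $\B{B}_1$ and $\B{B}_2$ is essential: without $\B{B}_1 \succeq 0$ the triangle inequality alone would only give $|\B{B}_1 \B{x}| \preceq |\B{B}_1|\, |\B{x}|$, which does not combine with $\B{B}_1 \preceq \B{B}_2$ to close the argument.
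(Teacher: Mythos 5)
Your proof is correct and complete: the element-wise dominance $|\B{B}_1 \B{x}| \preceq \B{B}_2 |\B{x}|$ combined with monotonicity and absoluteness of the $\ell_q$-norm on the nonnegative orthant is exactly the standard argument for this fact. The paper itself gives no proof --- it simply cites Horn and Johnson --- so there is nothing to compare against beyond noting that your route is the textbook one; the only cosmetic point is that the step $\norm{\B{B}_2 |\B{x}|}_q \le \norm{\B{B}_2}_q \norm{|\B{x}|}_q$ is just the defining compatibility of the induced norm rather than ``submultiplicativity.''
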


\section{Model description}%
\label{sec:model_description}%

To model the dynamics of the epidemic, we adapted and updated a network model that was originally presented and validated in \cite{dellarossa2020network}. 
We re-parameterized and updated the model with real data related to the spread of the epidemic in Italy collected by the National Civil Protection Authority \cite{protezione_civile}, between February 24, 2020 and February 25, 2021.
The model comprises $M = 3$ macro-regions corresponding
to the North, the Center, and the South of Italy.%
\footnote{The North comprises the regions Piedmont, Lombardy, Liguria, Veneto, Friuli-Venezia Giulia, Trentino-Alto Adige, Emilia Romagna; the Center comprises Tuscany, Umbria, Marche, Lazio, Abruzzo; the South comprises Molise, Campania, Basilicata, Apulia, Calabria, Sicily, Sardinia.}


The dynamics of the disease in each region $i \in \{ 1, \ldots, M\}$ is captured by a SIQHDR model, describing the discrete-time evolution of the number of susceptible ($S_i$), (undetected) infected ($I_i$), quarantined ($Q_i$), hospitalized ($H_i$), deceased ($D_i$), and recovered ($R_i$) as a function of time $t \in \BB{N}_{\ge 0}$ (in days).
We also define
$\B{S} \coloneqq [S_1 \ \cdots \ S_M]\T$,
$\B{I} \coloneqq [I_1 \ \cdots \ I_M]\T$,
$\B{Q} \coloneqq [Q_1 \ \cdots \ Q_M]\T$,
$\B{H} \coloneqq [H_1 \ \cdots \ H_M]\T$,
$\B{D} \coloneqq [D_1 \ \cdots \ D_M]\T$,
$\B{R} \coloneqq [R_1 \ \cdots \ R_M]\T$, 
and let $\B{x} \coloneqq [\B{S}\T \ \B{I}\T \ \B{Q}\T \ \B{H}\T \ \B{D}\T \ \B{R}\T ]\T \in \BB{R}^{6M}_{\succeq 0}$.
Moreover, for each pair $(i,j)$ of macro-regions, $\phi_{ij}(t) \in [0,1]$ describes the fraction of people in region $i$ that commute and interact with people in region $j$, returning to their region of origin at the end of each day $t$ (see also \cite{stolerman2015sirnetwork}).
The resulting network model is 
\begin{subequations}\label{eq:network_model_discrete}
\begin{align}
{S}_i(t+1) &= S_i(t) - \beta S_i(t) \sum_{j=1}^M \frac{\rho_j(t) \phi_{ij}(t)}{N_j^{\R{p}}(t)} {\sum_{k=1}^M{ \phi_{kj}(t)I_k(t)}},\\
{I}_i(t+1)&= I_i(t) + \beta S_i(t) \sum_{j=1}^M \frac{\rho_j(t) \phi_{ij}(t)}{N_j^{\R{p}}(t)} {\sum_{k=1}^M{ \phi_{kj}(t)I_k(t)}} \nonumber \\
&\phantom{={}} -(\gamma +\alpha_i(t) + \psi_i)I_i(t), \label{eq:network_model_discrete_I} \\
{Q}_i(t+1) &= Q_i(t) + \alpha_i(t) I_i(t) - (\kappa_i^H + \eta_i^Q)Q_i(t) + \kappa_i^Q H_i(t),\\
H_i(t+1) &= H_i(t) + \kappa_{i}^{H} Q_{i}(t) + \psi_{i} I_{i}(t) \nonumber \\
&\phantom{={}}- \left( \eta_{i}^{H} + \kappa_{i}^{Q} + \zeta \left( H_{i}(t) \right) \right) H_{i}(t), \\
{D}_i(t+1)&= D_i(t) + \zeta \left(H_{i}(t) \right)  H_{i}(t),\\
{R}_i(t+1)&= R_i(t) + \gamma I_i(t) + \eta_i^Q Q_i(t) + \eta_i^H H_i(t),
\end{align}
\end{subequations}
where
\begin{equation}\label{eq:N_i_p}
	N_i^\R{p}(t) = \sum_{k = 1}^M \phi_{ki}(t) \ (S_k(t) + I_k(t) + R_k(t)),
\end{equation}
and the initial conditions are given at time $t_0 \in \BB{N}_{\ge 0}$ as $\B{x}(t_0) = \B{x}^0 \in \BB{R}_{\succeq 0}^{6M}$.

Following \cite{dellarossa2020network}, in \eqref{eq:network_model_discrete} the infection rate $\beta$ and the recovery rate $\gamma$ are taken to be equal for all the macro-regions, as they have not been shown to strongly depend on regional factors.
The other parameters vary among different macro-regions to capture the heterogeneity of containment strategies, different regional health care systems, etc.
In particular, note that
\begin{equation}\label{eq:bound_parameters_I_out}
    \gamma + \alpha_i(t) + \psi_i \le 1, \quad \forall i \in \{1, \dots, M\}, \ \forall t \in \BB{N}_{\ge 0},
\end{equation}
as in \eqref{eq:network_model_discrete_I} the number of people leaving compartment $I_i$ can never be greater than $I_i$. 
Moreover, as done in \cite{dellarossa2020network}, we also estimate the number of patients requiring treatment in intensive care as 10\% of those who are hospitalized.
Thus, letting $T_i^H$ be the maximum number of beds in the ICUs of region $i$ \cite{dati_salute_gov, linee_indirizzo}, following \cite{dellarossa2020network} we express the mortality rate as
$\zeta(H_i(t)) = \zeta^0 + \zeta^{\R{b}} \min \left\{ \frac{0.1 H_i(t)}{T_i^H}, 1 \right\}$.
Finally, $N_i^\R{p}(t)$ is the population in region $i$ that is free-to-move at time $t$, whereas $N_i$ is the total population in region $i$.
All parameters' values are identified from data reported in \cite{protezione_civile} and are contained in Table \ref{tab:model_parameter_values}.

\begin{table}[t]
\begin{center}
\caption{Model parameter values; for a description of all parameters see \cite{dellarossa2020network}, page 6. 
The values of $T_i^H$ are reported from \cite{ICU_2021}} 
\label{tab:model_parameter_values}
\begin{tabular}{@{}l|l|lll@{}}
\toprule
Parameter  & Global &  North & Center & South \\
\midrule
$\beta$ & $0.4$ & - & - & - \\
$\gamma$ & $0.07$ & - & - & -  \\
$\zeta^0$ & $0.0168$ & - & - & - \\
$\zeta^{\R{b}}$ & $0.0068$ & - & - & - \\
$\psi_i$ & - & $0.0327$ & $0.0922$ & $0.1042$\\
$\eta_i^H$ & - & $0.6086$ & $0.4439$ & $0.7565$\\ 
$\eta_i^Q$ & - & $0.0141$ & $0.0063$ & $0.0075$ \\
$\kappa_i^H$ & - & $0.0375$ & $0.0200$ & $0.0429$ \\
$\kappa_i^Q$ & - & $0.0344$ & $0.0202$ & $0.0179$\\
$\phi_{1j}^0$ & - & $0.9979$ & $0.0013$ & $0.0008$ \\
$\phi_{2j}^0$ & - & $0.0030$ & $0.9949$ & $0.0021$ \\
$\phi_{3j}^0$ & - & $0.0011$ & $0.0024$ & $0.9965$ \\
$T_i^H$ & - & $4660$ & $2775$ & $3170$ \\
\bottomrule
\end{tabular}
\end{center}
\end{table}

\subsubsection*{Social distancing}

The parameter $\rho_i(t) \in [0,1]$ models the effect of social distancing in region $i$ at time $t$.
Namely, $\rho_i(t) = 1$ corresponds to the absence of social distancing measures, whereas $\rho_i(t) = 0$ means that contact between people is forbidden.
Intermediate values indicate different levels of social distancing.

\subsubsection*{Travel restrictions}

In what follows we assume
\begin{equation*}
    \phi_{ij}(t) =
    \begin{dcases}
        \varphi_{i}(t) \varphi_{j}(t) \phi_{ij}^0, & i \ne j, \\
    1 - \sum_{j = 1, j \ne i}^M \phi_{ij}(t), & \text{otherwise},
    \end{dcases}
\end{equation*}
where $\varphi_{i}(t) \in [0, 1]$ is a parameter modelling the strength of travel restrictions to/from region $i$ on day $t$, and $\phi_{ij}^0$ is the influx from region $i$ to region $j$ in the absence of travel restrictions.
In particular, $\varphi_{i}(t) = 1$ means that travel to/from region $i$ is possible, whereas $\varphi_{i}(t) = 0$ implies it is forbidden.

\subsubsection*{Testing}

The effect of testing-and-tracing is captured by the parameter $\alpha_i(t)$ in \eqref{eq:network_model_discrete}, which models the rate at which undetected infected $I_i$ are moved to the quarantined compartment $Q_i$ at time $t$. 
To describe the ability of a region to modulate the amount of testing carried out within its territory, we set $\alpha_i(t) = \alpha^0 + \sigma_i(t) \tilde{\alpha}$,
where  $\sigma_i(t) \in [0, 1]$ is a parameter denoting the testing capacity of region $i$ on day $t$.
In particular, testing is set to the minimal level, $\alpha^0 \in \BB{R}_{> 0}$, when $\sigma_i(t) = 0$, or to its maximal value, $\alpha^0+\tilde{\alpha}$, when $\sigma_i(t) = 1$. 

The values $\alpha^0$ and $\tilde{\alpha}$ were identified  as $\alpha^0 = 0.0671$ and $\tilde{\alpha} = 0.0806$ by performing a linear regression of the data collected by each Italian region in the time window from April $22$, $2020$ to February $25$, $2021$, with a resulting correlation coefficient of $0.312$, as reported in Figure \ref{fig:regression}.
\begin{figure}[h!]
  \centering
  \includegraphics[width=0.7\textwidth]{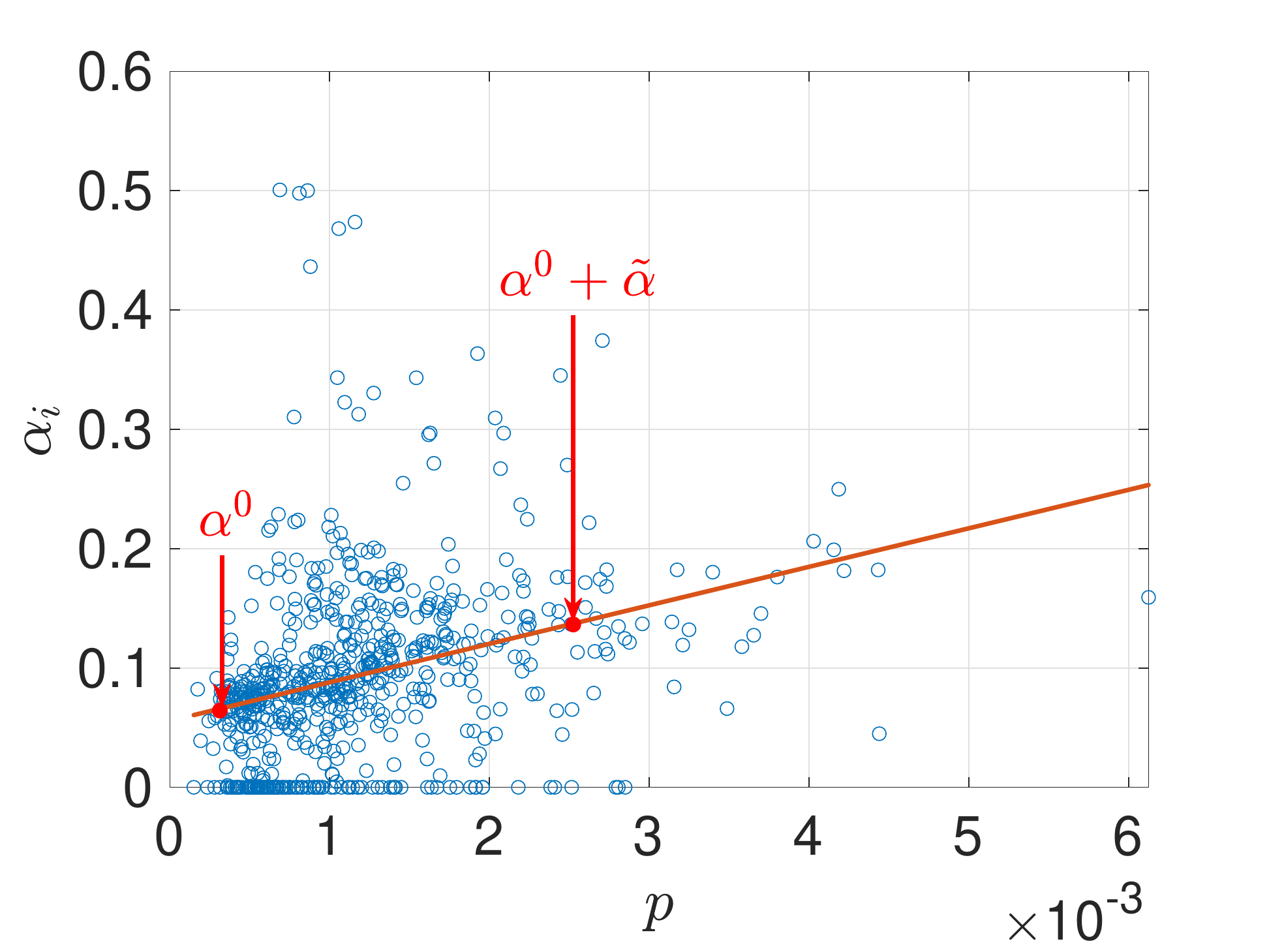}
  \caption{Linear regression of the values of $\alpha_i$ in each region (see (\ref{eq:network_model_discrete})) over the fraction $p$ of population tested daily between April $22, 2020$ and February $25, 2021$, taken from \cite{protezione_civile}.}
  \label{fig:regression}
\end{figure}



\subsection*{Reproduction number}
A crucial quantity to assess the rate of propagation of an epidemic disease is the \emph{reproduction number}, defined in epidemiology as ``the number of secondary infections generated by one infected person'' \cite[Ch.~27]{Rothman}.
Typically, the \emph{basic} reproduction number $\C{R}_0$ is computed when the entire population is assumed to be susceptible \cite{Heffernan}, while the \emph{effective} reproduction number $\C{R}_t$ is estimated from data \cite{Rothman, Nishiura}. Specifically, the effective reproduction number of region $i$, denoted by $\C{R}_{t,i}$, can be estimated via the method described in \cite[§8]{kohlberg2020demystifying}, that is, by computing
\begin{equation}\label{eq:estimation_R_t_i}
    \C{R}_{t,i} = \frac{\sum_{k = t - 3}^t [(N_i - S_i(k)) - (N_i - S_i(k-1))]}{\sum_{k = t - 7}^{t-4} [(N_i - S_i(k)) - (N_i - S_i(k-1))]}.
\end{equation}

Note that while an analytical expression of $\C{R}_0$ for network epidemic models can be found in the existing literature, e.g., \cite{stolerman2015sirnetwork}, \cite{diekmann1990definition}, at the moment an algebraic expression for $\C{R}_t$ does not exist in a closed form for network models.
Later in Section \ref{sec:theretical_section}, we derive a sufficient condition to ensure that $\C{R}_t$ (for the network) is below the critical unitary threshold.


\section{Analysis}%
\label{sec:theretical_section}

We let $\B{\rho}(t) \coloneqq [\rho_1(t) \ \cdots \ \rho_M(t)]\T$,
$\B{\varphi}(t) \coloneqq [\varphi_1(t) \ \cdots \ \varphi_M(t)]\T$,
$\B{\sigma}(t) \coloneqq [\sigma_1(t) \ \cdots \ \sigma_M(t)]\T$,
and define the vector of time-varying parameters  $\B{u}(t) \coloneqq [\B{\rho}\T(t) \ \B{\varphi}\T(t) \ \B{\sigma}\T(t)]\T \in [0, 1]^{3M}$, which will be used later in Sections \ref{sec:control_problem} and \ref{sec:mpc_implementation_and_validation} as the vector of control inputs to mitigate the epidemic.

\subsection{Dynamics and variation of the infected around a trajectory}

We focus on the dynamics of the infected recasting \eqref{eq:network_model_discrete_I} in a more compact form as
\begin{equation}\label{eq:network_model_discrete_I_compact}
    \B{I}(t + 1) = 
    (\C{I} + \B{\Psi}(\B{x}(t), \B{u}(t))) \  \B{I}(t),
\end{equation}
where we stressed the dependency of $\B{\Psi}$ on the state vector $\B{x}$ and the control input $\B{u}$, and where
\begin{equation}\label{eq:Psi_ij}
	\Psi_{ij}(\B{x}, \B{u}) := 
	\begin{dcases}
		\beta S_i \sum_{k=1}^M \frac{ \rho_k \phi_{ik} \phi_{jk}}{N_k^\R{p}} - (\alpha_i + \psi_i + \gamma), & i = j, \\
		\beta S_i \sum_{k=1}^M \frac{ \rho_k \phi_{ik} \phi_{jk}}{N_k^\R{p}}, & i \ne j.
	\end{dcases}
\end{equation}

In what follows, say $\bar{\B{x}}(t) \coloneqq [\bar{\B{S}}\T(t) \ \bar{\B{I}}\T(t) \ \bar{\B{Q}}\T(t) \ \bar{\B{H}}\T(t) \ \bar{\B{D}}\T(t) \ \bar{\B{R}}\T(t) ]\T$
the state evolution of \eqref{eq:network_model_discrete} starting from some initial condition $\bar{\B{x}}(t_0) = \bar{\B{x}}^0 \in \BB{R}^{6M}_{\succeq 0}$, $t_0\in\BB{N}_{\ge 0}$, when $\B{u}(t)$ is equal to some $\bar{\B{u}}(t)$.

Additionally, following the epidemiological definition of $\C{R}_t$ we obtain an expression that yields the variations of $\B{I}(t)$, say $\delta {\B{I}}(t)$, around $\bar{\B{I}}(t)$, due to small perturbations in the infected only.
Formally, this can be done by considering in \eqref{eq:network_model_discrete_I_compact} all the states ($\B{x}(t)$) except $\B{I}(t)$ as inputs to the dynamics of the infected compartment.
After some algebraic manipulations, this dynamics can be given as%
\footnote{Note that \eqref{eq:variation_dynamics_I} only contains the variations of $\B{I}(t)$ due to $\B{I}(t)$ itself, consistently with the epidemiological interpretation of the reproduction number.}
\begin{equation}\label{eq:variation_dynamics_I}
    \delta\B{I}(t+1) 
	= (\C{I} + \B{\Psi}(\bar{\B{x}}(t), \bar{\B{u}}(t)) + \B{\Gamma}(\bar{\B{x}}(t), \bar{\B{u}}(t))) \ \delta\B{I}(t),
\end{equation}
where
\begin{equation}\label{eq:Gamma_ij}
	\Gamma_{ij}(\B{x}, \B{u}) := \sum_{q = 1}^M \frac{\partial \Psi_{iq}}{\partial I_j} I_q = 
	- \beta S_i \sum_{k=1}^M \left( \frac{\rho_k \phi_{ik} \phi_{jk}}{N_k^\R{p}} \sum_{q=1}^M \frac{\phi_{qk}}{N_k^\R{p}} I_q \right).
\end{equation}

\subsection{A sufficient condition for epidemic containment}

Next, we use \eqref{eq:network_model_discrete_I_compact} and \eqref{eq:variation_dynamics_I} to give a simple yet effective condition to (i) enforce suppression of the epidemic and (ii) to ensure that the number of secondary infections generated by perturbing the number of infected along a trajectory decays exponentially to zero.

\begin{proposition}\label{pro:stability_trajectory}
    Assume that there exists some matrix norm, say $\norm{\cdot}_q$, and some constant $c < 1$ such that
    \begin{equation}\label{eq:stability_condition}
        \left\lVert \C{I} + \B{\Psi}(\bar{\B{x}}(t), \bar{\B{u}}(t)) \right\rVert_q \le c<1, \quad \forall t \ge t_0, \ t_0 \in \BB{N}_{\ge 0}.
    \end{equation} 
    Then, (i) $\norm{\bar{\B{I}}(t)}_q \le c^{t-t_0} \norm{\bar{\B{I}}(t_0)}_q$ for all $t \ge t_0$, and (ii) $\norm{ \delta \B{I}(t)}_q \le c^{t-t_0} \norm{\delta\B{I}(t_0)}_q$ for all $t \ge t_0$.
\end{proposition}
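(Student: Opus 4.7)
The plan is to handle (i) directly from the recursion, and to reduce (ii) to (i) by showing that the matrix $\C{I} + \B{\Psi} + \B{\Gamma}$ is elementwise dominated (in absolute value) by $\C{I} + \B{\Psi}$ and invoking Lemma \ref{lem:bounds_matrices_norms}.

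For part (i), I would simply evaluate \eqref{eq:network_model_discrete_I_compact} along the trajectory $(\bar{\B{x}}, \bar{\B{u}})$, take the $\ell_q$ induced norm on both sides, use submultiplicativity to obtain $\norm{\bar{\B{I}}(t+1)}_q \le \norm{\C{I}+\B{\Psi}(\bar{\B{x}}(t),\bar{\B{u}}(t))}_q \norm{\bar{\B{I}}(t)}_q \le c \norm{\bar{\B{I}}(t)}_q$ via the assumption \eqref{eq:stability_condition}, and then iterate from $t_0$ to $t$.

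For part (ii), the key is to bound $\norm{\C{I}+\B{\Psi}(\bar{\B{x}}(t),\bar{\B{u}}(t)) + \B{\Gamma}(\bar{\B{x}}(t),\bar{\B{u}}(t))}_q$ by $c$. First I would verify the sign structure: using \eqref{eq:bound_parameters_I_out}, the diagonal entries of $\C{I}+\B{\Psi}$ are nonnegative because $1-(\gamma+\alpha_i+\psi_i)\ge 0$ and the remaining term in \eqref{eq:Psi_ij} is nonnegative; the off-diagonal entries of $\B{\Psi}$ are nonnegative by inspection of \eqref{eq:Psi_ij}; hence $\C{I}+\B{\Psi}\succeq 0$. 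Moreover, from \eqref{eq:Gamma_ij} every entry of $\B{\Gamma}$ is nonpositive, so $\C{I}+\B{\Psi}+\B{\Gamma}\preceq \C{I}+\B{\Psi}$ elementwise.

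The main obstacle is showing the other inequality, namely $\C{I}+\B{\Psi}+\B{\Gamma}\succeq 0$, which is the prerequisite for applying Lemma \ref{lem:bounds_matrices_norms}. The trick I would exploit is the elementary bound
\begin{equation*}
    \frac{\sum_{q=1}^M \phi_{qk}(t)\, \bar I_q(t)}{N_k^{\R p}(t)} \le 1,
\end{equation*}
which follows from the definition \eqref{eq:N_i_p} of $N_k^{\R p}$ together with $\bar{\B{S}},\bar{\B{I}},\bar{\B{R}}\succeq\B{0}$. Plugging this into \eqref{eq:Gamma_ij} gives $|\Gamma_{ij}(\bar{\B{x}},\bar{\B{u}})|\le \beta \bar S_i \sum_k \rho_k\phi_{ik}\phi_{jk}/N_k^{\R p}$, i.e.\ $|\Gamma_{ij}|\le \Psi_{ij}$ for $i\ne j$ and $|\Gamma_{ii}|\le \Psi_{ii}+(\gamma+\alpha_i+\psi_i)$, so that all entries of $\C{I}+\B{\Psi}+\B{\Gamma}$ remain nonnegative (again using \eqref{eq:bound_parameters_I_out} on the diagonal). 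With both $\C{I}+\B{\Psi}+\B{\Gamma}\succeq 0$ and $\C{I}+\B{\Psi}+\B{\Gamma}\preceq \C{I}+\B{\Psi}$ established, Lemma \ref{lem:bounds_matrices_norms} yields
\begin{equation*}
    \norm{\C{I}+\B{\Psi}(\bar{\B{x}}(t),\bar{\B{u}}(t))+\B{\Gamma}(\bar{\B{x}}(t),\bar{\B{u}}(t))}_q \le \norm{\C{I}+\B{\Psi}(\bar{\B{x}}(t),\bar{\B{u}}(t))}_q \le c.
\end{equation*}
Applying this uniform bound to \eqref{eq:variation_dynamics_I}, taking $\ell_q$ norms and iterating as in part (i) gives the claim.
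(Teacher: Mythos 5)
Your proposal is correct and follows essentially the same route as the paper's own proof: part (i) by direct iteration of the recursion, and part (ii) by sandwiching $\C{I}+\B{\Psi}+\B{\Gamma}$ between $\B{0}$ and $\C{I}+\B{\Psi}$ elementwise (using the bound $\sum_{q}\phi_{qk}I_q/N_k^{\R{p}}\le 1$ and \eqref{eq:bound_parameters_I_out}) and then invoking Lemma \ref{lem:bounds_matrices_norms}. No gaps; the argument matches the paper step for step.
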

\begin{proof}
For the sake of brevity, we denote $\B{\Psi}(\bar{\B{x}}(t), \bar{\B{u}}(t))$ by $\B{\Psi}(t)$ and $\B{\Gamma}(\bar{\B{x}}(t), \bar{\B{u}}(t))$ by $\B{\Gamma}(t)$.

Part (i) follows directly from the fact that, from \eqref{eq:network_model_discrete_I_compact}, we have
  $\norm{\bar{\B{I}}(t)}_q = \norm{\prod_{\tau=t_0}^{t-1} \left(\C{I} + \B{\Psi}(\tau) \right) \ \bar{\B{I}}(t_0)}_q \le
  \prod_{\tau = t_0}^{t-1} \norm{\C{I} + \B{\Psi}(\tau)}_q  \norm{\bar{\B{I}}(t_0)}_q$.
   
Analogously, to prove part (ii), it suffices to exploit \eqref{eq:variation_dynamics_I} and show that \eqref{eq:stability_condition} implies
\begin{equation}\label{eq:stability_condition_variations}
        \left\lVert \C{I} + \B{\Psi}(t) + \B{\Gamma}(t) \right\rVert_q \le c < 1, \quad
        \forall t \ge t_0, \ t_0 \in \BB{N}_{\ge 0}.
\end{equation}
To show this, we let 
$\B{B}_1(t) \coloneqq \C{I} + \B{\Psi}(t) + \B{\Gamma}(t)$ and
$\B{B}_2(t) \coloneqq \C{I} + \B{\Psi}(t)$, and note the following facts:
(a) recalling \eqref{eq:bound_parameters_I_out} and \eqref{eq:Psi_ij}, we have $\B{B}_2(t) \succeq 0$, $\forall t \in \BB{N}_{\ge 0}$;
(b) as
$\B{\Gamma}(\B{x}, \B{u}) \preceq 0, \forall \B{x} \in \BB{R}_{\succeq 0}^{6M}, \forall \B{u} \in [0, 1]^{3M}$,
it holds that $\B{B}_1(t) \preceq \B{B}_2(t)$, $\forall t \in \BB{N}_{\ge 0}$;
(c) we can rewrite the term in \eqref{eq:Gamma_ij}, using \eqref{eq:N_i_p}, as
\begin{equation*}
\sum_{q=1}^M \frac{\phi_{qk}}{N_k^\R{p}} I_q = \frac{\sum_{q=1}^M \phi_{qk} I_q}{\sum_{q=1}^M \phi_{qk} (S_q + I_q + R_q)} \le 1, \ \forall k \in \{1, \dots, M\},
\end{equation*}
and, recalling \eqref{eq:bound_parameters_I_out}, \eqref{eq:Psi_ij}, \eqref{eq:Gamma_ij}, we get that $\B{B}_1(t) \succeq 0, \forall t \in \BB{N}_{\ge 0}$.

Finally, exploiting the facts (a), (b), (c) we can apply Lemma \ref{lem:bounds_matrices_norms} and, considering \eqref{eq:stability_condition}, we have $\norm{\B{B}_1(t)} \le \norm{\B{B}_2(t)} \le c$, $\forall t\ge t_0$, that is \eqref{eq:stability_condition_variations}, which proves (ii).
\end{proof}

Next, we give an interpretation of Proposition \ref{pro:stability_trajectory} in terms of $\C{R}_t$. 
Recall that $\C{R}_t$ is defined as the average number $b \ge 0$ of secondary infections generated by $a = 1$ infected people.
The well-known condition $\C{R}_t < 1$ means that one infected person will infect on average less than one person, and the epidemic will disappear, with the total number of infected (that is $\left\lVert \B{I}(t) \right\rVert_1 = \sum_{i = 1}^M \left| I_i(t) \right|$) decreasing at all time.
Indeed, this corresponds to part (i) of the thesis of Proposition \ref{pro:stability_trajectory}, when considered with $q = 1$.
Moreover, noting that in our framework $a$ corresponds to $\left\lVert \delta \B{I}(t) \right\rVert_1$, it is obvious that each person infects on average less than one person if and only if $\left\lVert \delta \B{I} (t) \right\rVert_1$ decreases monotonically over time, which corresponds to part (ii) of Proposition \ref{pro:stability_trajectory}.

%
%

Additionally, as recently noted in \cite[Remark 6]{shihao}, the use of the $\infty$-norm prohibits the amplification of bounded deterministic disturbance on the infected dynamics within the network system.
Its application in condition \eqref{eq:stability_condition} yields
$ 
    \max_{i \in \{1, \dots, M\}} \left| \delta I_i(t + 1) \right|  < \max_{i \in \{1, \dots, M\}} \left| \delta I_i(t)\right| 
$, 
which means that the maximum variation of infected across all regions decreases over time.
We argue that this condition might be effectively used to assess quantitatively the trend of an epidemic spread. 

As proven in the Appendix, Proposition \ref{pro:stability_trajectory} can also be related to a condition independently obtained in \cite{ma2021optimal} when specialized to a SIS network model (see the Appendix for the formal statement and further details). 

\section{Control problem formulation}%
\label{sec:control_problem}%

We now formulate our control problems, aimed at either mitigating or suppressing the epidemic, using different combinations of control actions, while minimizing the economic impact of the measures.

\subsubsection*{Suppression of the epidemic}

First, we consider the problem of finding the cheapest strategies in terms of social distancing ($\B{\rho}$) and travel restrictions ($\B{\varphi}$) that enforce suppression of the epidemic, assuming testing ($\B{\sigma}$) is not increased beyond its nominal value.
To this aim, we define---omitting time dependence for the sake of brevity,
\begin{equation}\label{eq:A_t_i}
    A_i(\B{x}, \B{u}) \coloneqq \left\lvert 1 + \Psi_{ii}(\B{x}, \B{u}) \right\rvert + \sum_{j = 1, j \ne i}^M \left\lvert \Psi_{ij}(\B{x}, \B{u}) \right\rvert.
\end{equation}
Clearly, if $A_i \le c < 1$, $\forall i, \forall t \ge t_0$, then \eqref{eq:stability_condition} in Proposition \ref{pro:stability_trajectory} holds with the infinity norm, and the infected population decreases to zero.
Thus, the control goal can be formulated as the constrained optimization problem
\begin{subequations}\label{eq:optimization_problem_suppression}
\begin{align}
 \min_{\C{U}_{[1, T-1]}} \quad 
 & J_{[1, T]}, \\
 \text{s.t.} \quad &\eqref{eq:network_model_discrete}, \quad \forall t, \nonumber \\
 & A_{i}(\B{x}(t), \B{u}(t)) \le c < 1, \quad \forall i, \ \forall t, \label{eq:optimization_problem_suppression_stability}\\
 & \rho_i(t) \in \C{S}_\rho, \ \varphi_i(t) \in \C{S}_\varphi, \ \sigma_i(t) = 0, \quad \forall i, \forall t, \nonumber \\
 &{w_\B{u} = \tau^*}, \label{eq:dwell_time}
\end{align}
\end{subequations}
where $J_{[1, T]}$ is the economic cost function over the time window $[1, T]$, and will be presented in Section \ref{sec:cost_function},
$\C{U}_{[1, T-1]} \coloneqq (\B{u}(1), \B{u}(2), \dots, \B{u}(T-1))$ is the set of all decision variables, 
$\C{S}_\rho$, $\C{S}_\varphi$ are sets of possible values of $\rho_i$ and $\varphi_i$, which will be selected in Section \ref{sec:mpc_implementation_and_validation}, and
$w_\B{u}$ is the \emph{minimum dwell time} on $\B{u}$.
In loose terms, \eqref{eq:dwell_time} means that the value of $\B{u}$ cannot be changed before at least $\tau^* \in \BB{N}_{>0}$ days have passed since the last change (similarly to what done in \cite{morato2020optimal-1, CARLI2020373}); a formal implementation of \eqref{eq:dwell_time} is detailed in Section \ref{sec:mpc_implementation_and_validation}.


\subsubsection*{Mitigation of the epidemic}

To better preserve the economy, and as currently done in many countries affected by the epidemic, we also study a different approach, where some of the restrictions can be lifted when certain conditions are met.

To this aim, we obtain a relaxation of condition \eqref{eq:stability_condition} by considering the condition
\begin{equation}\label{eq:relaxed_stability}
    \forall i,\forall t, \quad \text{if } \pi_i(t) \implies A_{i}(\B{x}(t), \B{u}(t)) \le c < 1.
\end{equation}
where $\pi_i(t)$ is a Boolean condition associated to the ``criticality'' of the epidemic in region $i$, that we choose as
\begin{equation}
    \pi_i(t) = \begin{dcases}
        1, & 0.1 H_i(t) \ge \epsilon_H T_i^H \ \vee \ \C{R}_{t, i} \ge \epsilon_\C{R}, \\
        0, & \text{otherwise},
    \end{dcases}
\end{equation}
linking it to the level of saturation of the ICUs in each region and the value of $\C{R}_{t,i}$ in each region, estimated using \eqref{eq:estimation_R_t_i}, with
$\epsilon_H \in (0, 1]$, $\epsilon_\C{R} \in \BB{R}_{>0}$.
The resulting control problem can be then formulated as \eqref{eq:optimization_problem_suppression} with \eqref{eq:optimization_problem_suppression_stability} substituted by \eqref{eq:relaxed_stability}.


\subsubsection*{The effect of testing}

Finally, we explore if and how increasing the amount of testing beyond the nominal value in each region can further reduce costs by formulating the problem as \eqref{eq:optimization_problem_suppression} where the constraint on $\sigma_i(t)$ is chosen as $\sigma_i(t) \in \C{S}_\sigma, \forall i, \forall t$, with $\C{S}_\sigma$ being a set of possible values, selected later in Section \ref{sec:mpc_implementation_and_validation}.


\subsection{Cost function}%
\label{sec:cost_function}

As a cost function for the optimization problems above, we consider the total economic cost of the measures undertaken by each region, computed accounting for the cost of people not being able to work and the cost of testing.

Let $P_{\R{active}, i}(t) \coloneqq S_i(t) + I_i(t) + R_i(t)$; $c_{\R{m}} = \text{\euro} 83.992$ be the mean daily GDP pro-capita in Italy \cite{IMF2020},
$c_{\R{w}} = 0.617$ be the fraction of people that are \emph{not} able to work when social distancing is in place \cite{holgersenwho}, and
$c_\alpha$ be the daily cost of carrying out maximal testing, normalized over the population.
Then, the economic cost at time $t$ in region $i$  is given by
$
J_i(t) = J_{i,1}(t) + J_{i,2}(t) + J_{i,3}(t) + J_{i,4}(t) + J_{i,5}(t),
$
where the addends are defined as follows:
\begin{enumerate}[1.]
    \item \label{ite:cost_1} $J_{i,1}(t)$ is the loss due to people that live and work in region $i$, but cannot work because of social distancing:
    $J_{i,1}(t) \coloneqq c_{\R{m}} c_\R{w}  \left(1 - \sqrt{\rho_i(t)}\right) \phi_{ii}^0 P_{\R{active}, i}(t)$;

    \item \label{ite:cost_2} $J_{i,2}(t)$ is the loss due to people that live in other regions and work in $i$, can travel but cannot work because of social distancing:
    $J_{i,2}(t) \coloneqq c_{\R{m}} c_\R{w}  \left( 1 - \sqrt{\rho_i(t)} \right) \sum_{j = 1, j \ne i}^N \phi_{ji}(t) P_{\R{active}, j}(t)$;
    
    \item \label{ite:cost_3} $J_{i,3}(t)$ is the cost of travel restrictions computed in terms of people that live in other regions and work in $i$, but cannot travel:
    $J_{i,3}(t) \coloneqq c_\R{m} c_{\R{w}} \sum_{j=1, j \ne i}^M \left( \phi_{ij}^0 - \phi_{ij}(t) \right) P_{\R{active}, j}(t)$;
    
    \item \label{ite:cost_4} $J_{i,4}(t)$ is the loss due to people that live in $i$ but cannot work because becoming incapacitated:
    $J_{i,4}(t) \coloneqq c_{\R{m}} [c_\R{w} Q_i(t) + H_i(t) + D_i(t)]$;
    
    \item \label{ite:cost_5} $J_{i,5}(t)$ is related to the cost of carrying out increased testing in region $i$ and is computed as
    $J_{i,5}(t) \coloneqq c_\alpha N_i \sigma_i(t)$.
\end{enumerate}

The expressions in items \ref{ite:cost_1} and \ref{ite:cost_2} were obtained by noticing that in \eqref{eq:network_model_discrete}---omitting subscripts for notational easiness---the amount of new infected at each time step is proportional to $\rho S I = (\sqrt{\rho} S) (\sqrt{\rho} I)$.
Therefore, the portion of people that is \emph{not} restrained is proportional to  $\sqrt{\rho}$, and consequently the fraction of population that \emph{is} restrained is proportional to $(1 - \sqrt{\rho})$.
In item \ref{ite:cost_5}, $c_\alpha = \tilde{p} c_\R{test} k_p = 0.3583$, where $c_\R{test} = \text{\euro} 59$ is the economic cost of a single test \cite{cicchetti2020}, $k_p = 1.74$ is the average number of tests carried out per tested person \cite{protezione_civile}, and $\tilde{p} = 3.49 \times 10^{-3}$ is the fraction of people tested daily when testing is maximal.

Letting the regional terminal cost be $\bar{J}_i(T) = 0$, and assuming the presence of a discount factor $0 < d \le 1$, the total cost over the time window $[t_0,T]$ can then be given as 
\begin{equation*} 
    J_{[t_0, T]} = \sum_{t = t_0}^{T-1} d^{(t-t_0)}\left( \sum_{i = 1}^M J_i(t) \right) + \sum_{i = 1}^M \bar{J}_i(T).
\end{equation*}

\section{MPC implementation and validation}%
\label{sec:mpc_implementation_and_validation}

We solved the control problems described in the previous section by using a Model Predictive Control approach with prediction horizon $T_\R{p}$ \cite{morari1999model}. 


We chose the sets $\C{S}_\rho$, $\C{S}_\varphi$, $\C{S}_\sigma$ of possible values of the control inputs as follows.
For $\rho_i(t)$, we approximated the values of this parameter estimated in \cite{dellarossa2020network} for each of the Italian regions when different levels of social distancing were enforced, obtaining $\C{S}_\rho = \{0.3, 0.4, 0.5, 0.6, 0.7\}$.
For $\varphi_i(t)$, we replicated the decrease of 70\% in mobility fluxes among regions as registered in \cite{google_mobility_data_2020} during the first months of the Italian national lockdown (February--May 2020), obtaining $\C{S}_\varphi = \{\sqrt{0.3}, 1\}$.
As for $\sigma_i(t)$, we chose three values to represent low, medium, and intense testing levels, i.e., $\C{S}_\sigma = \{0, 0.5, 1\}$. 

To implement the dwell time constraint \eqref{eq:dwell_time}, we prescribe that the control input $\B{u}$ on day $t$ can be changed only if $\tau(t) = \tau^*$, where $\tau(t)$ is the number of days (up to $\tau^*$), counting back from $t$ (excluding $t$), such that the control variables stayed equal to the value they had in $t-1$.
To compute $\tau(t)$, we initialize $\tau(1) = 0$, and update $\tau$ according to the law 
\begin{equation}
      \tau(t+1) = \begin{cases}
        \min \{ \tau(t)+1, \tau^* \}, & \text{if } \B{u}(t) = \B{u}(t-1), \\
        1, &\text{otherwise}.
    \end{cases}  
\end{equation}
Moreover, to simplify computations, when $\tau(t) = \tau^*$, if the MPC finds that $\B{u}(t)$ can be applied continuously for $m$ days, and that $\B{u}(t) = \B{u}(t-1)$, we apply $\B{u}(t)$ for $\min \{ m, l_{\R{MPC}} \}$ days (rather than just one day), for some $l_{\R{MPC}} \in \BB{N}_{>0}$.

The optimization problems are solved using a genetic algorithm implemented in \textsc{Matlab} 2020 \cite{Ga_toolbox}, with 
a population size of 4000, elite percentage of 20\%, crossover fraction of 40\%, and stall generations equal to 50.
The values we used for the control parameters are:
$T = 365$,
$\tau^* = 14$,
$T_\R{p} = 29$,
$l_{\R{MPC}} = 5$,
$\epsilon_H = 0.3$,
$\epsilon_\C{R} = 1.3$,
$d = 0.9$,
$c = 0.99$.

\subsection{Numerical results}%
\label{sec:numerical_results}




\subsubsection{Epidemic suppression}
We can observe the solution to \eqref{eq:optimization_problem_suppression} (with constraint \eqref{eq:optimization_problem_suppression_stability}) in Figure \ref{fig:scen1}.
As expected, the epidemic is effectively eradicated by the control strategy, as the number of infected people asymptotically goes to zero.
However, this strategy is particularly expensive with a total cost of \euro $421.989 \cdot 10^9$ (computed with $d = 1$).

\begin{figure}[t]
    \centering
    \includegraphics[trim=0.9cm 6.3cm 0.8cm 6.5cm, clip,width=\textwidth]{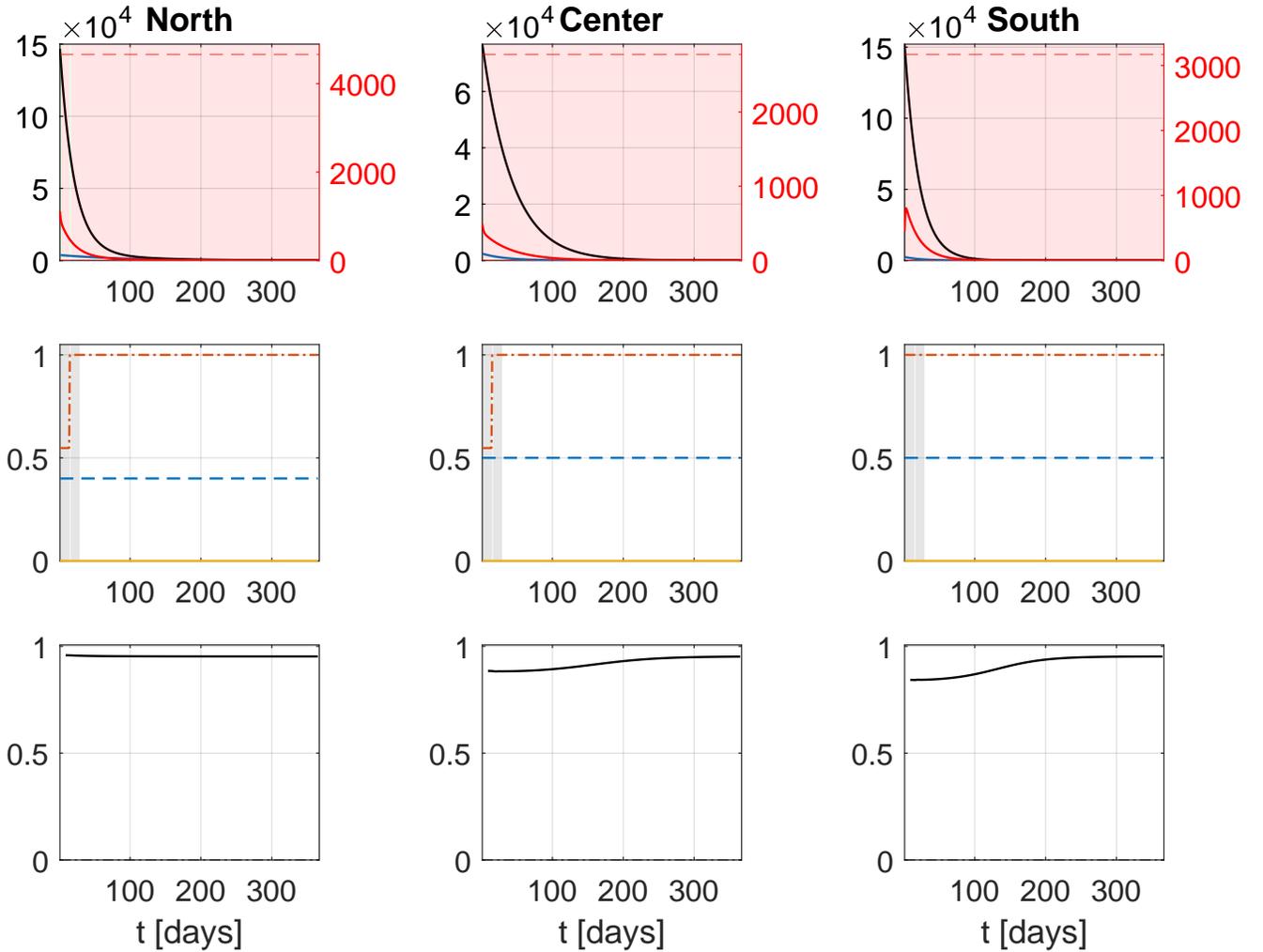}
    \caption{Evolution in each region of the model states (top panels), the control inputs (middle panels) and the values of $\mathcal{R}_{t,i}$ (bottom panels) when MPC is used to solve problem \eqref{eq:optimization_problem_suppression} (with constraint \eqref{eq:optimization_problem_suppression_stability}). 
    In the top panels, the scale on the left vertical axis refers to $I_i$ (blue) and $Q_i$ (black), whereas the scale in red on the right vertical axis refers to $0.1 H_i$ (solid red), $T_i^H$ (dashed red), and $\epsilon_H T_i^H$ (dash-dotted red). 
    Plots are shaded in red when condition \eqref{eq:stability_condition} is applied.
In the middle panels, the control variables $\rho_i$ (blue), $\varphi_i$ (orange), and $\sigma_i$ (yellow) are depicted, with the plots being shaded in grey when $\tau(t) < \tau^*$ (and the control values cannot be changed). 
In the bottom panels, the current estimate of $\C{R}_{t,i}$ estimated using \eqref{eq:estimation_R_t_i} is plotted.}
    \label{fig:scen1}
\end{figure}

\subsubsection{Mitigation}

The results of the control strategy when applied to solve \eqref{eq:optimization_problem_suppression} with constraint \eqref{eq:relaxed_stability} is reported in Figure \ref{fig:scen2}. 
In this case, the total cost is \euro $337.172\cdot 10^9$, significantly lower than the previous scenario.
This is because in each region NPIs interventions are taken only when some safety thresholds are met and only in that specific region.
At the same time, the number of people in ICUs remains within capacity, and all regional effective reproduction numbers $\C{R}_{t,i}$ remain within acceptable values.

\begin{figure}[t]
    \centering
    \includegraphics[trim=1.1cm 6.3cm 0.8cm 6.5cm, clip,width=\textwidth]{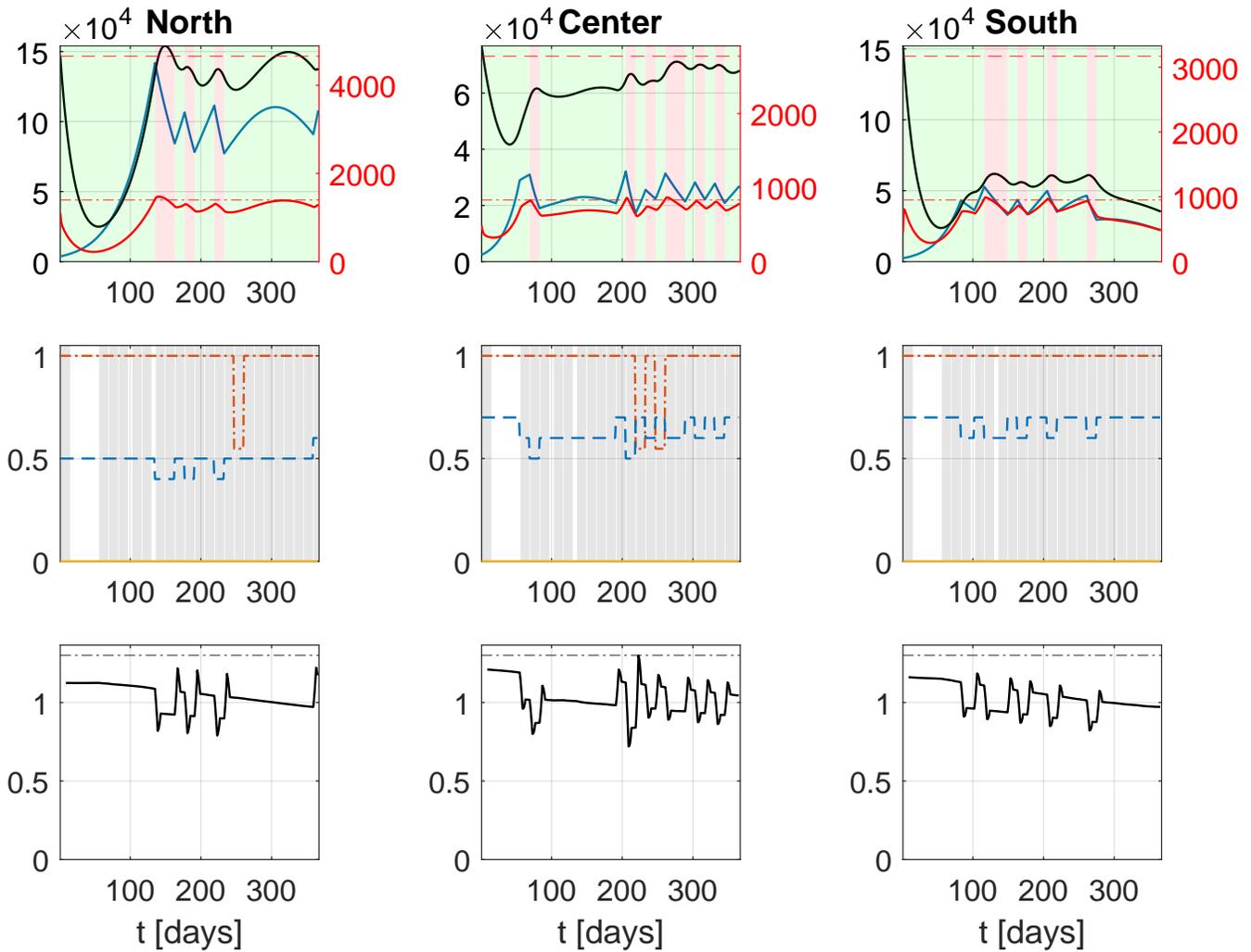}
    \caption{Evolution in each region of the model states (top panels), the control inputs (middle panels) and the values of $\mathcal{R}_{t,i}$ (bottom panels) when MPC is used to solve \eqref{eq:optimization_problem_suppression} with constraint \eqref{eq:relaxed_stability}. 
    Top panels are shaded in red when $A_i(\B{x}(t), \B{u}(t))\le c$ (see \eqref{eq:A_t_i}), while green is used when $A_i(\B{x}(t), \B{u}(t)) > c$.
    In the bottom panels, the threshold value $\epsilon_\C{R}$ for $\C{R}_{t,i}$ is plotted in a dash-dotted line.}
    \label{fig:scen2}
\end{figure}

\subsubsection{Effect of additional testing}

Finally, we report the solution to the problem when $\sigma_i(t) \in \C{S}_\sigma$ in Figure \ref{fig:scen3}.
The total cost is \euro $262.379\cdot 10^9$, even lower than the previous case.
Clearly, the use of additional testing allows to further reduce costs by relying less on social distancing to mitigate the epidemic.
Indeed, in Figure \ref{fig:scen3}, it is possible to note that travel restrictions are never used in this case, and social distancing is significantly less severe when compared to that in Figures \ref{fig:scen1} and \ref{fig:scen2}.

\begin{figure}[t]
    \centering
    \includegraphics[trim=1.1cm 6.3cm 0.8cm 6.5cm, clip,width=\textwidth]{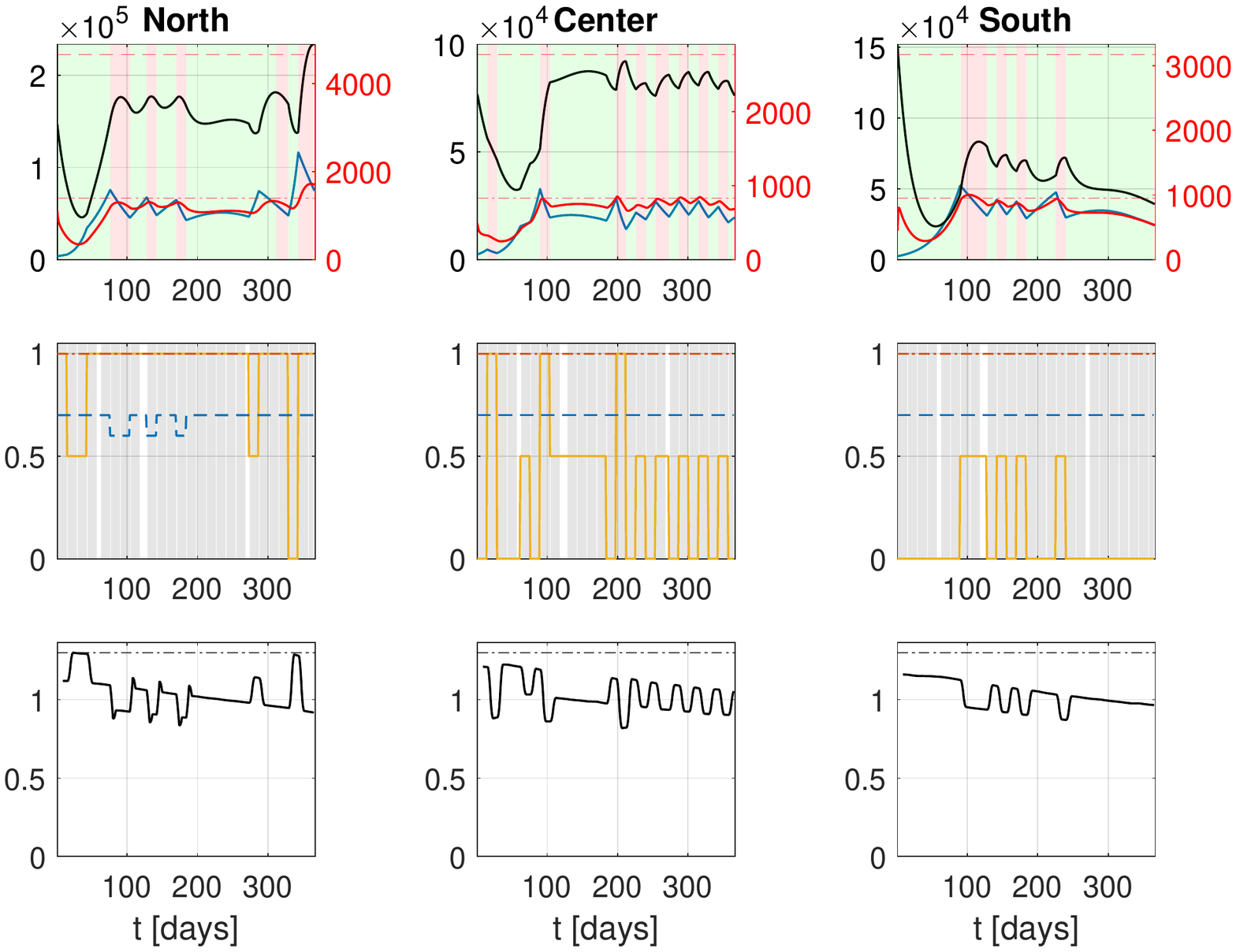}
    \caption{Evolution in each region of the model states (top panels), the control inputs (middle panels) and the values of $\mathcal{R}_{t,i}$ (bottom panels) when MPC is used to solve the problem when $\sigma_i \in \C{S}_\sigma$.}
    \label{fig:scen3}
\end{figure}

\section{Conclusions}%
\label{sec:conclusion}%
We considered the problem of designing localized intermittent non-pharmaceutical intervention strategies to mitigate the COVID-19 epidemic in a regional model of the Italian epidemic. 
We first gave a sufficient condition to determine the stability of the infected population, in a network epidemic model.
We then exploited this condition, or a relaxed version of it, as a constraint to formulate the control problem as an optimization problem in which the levels of social distancing, travel restrictions and additional testing must be selected to avoid saturation of the regional health systems and containment (or suppression) of the epidemic spread, while minimizing the overall cost to the economy.
Our results show the effectiveness of the proposed approach.
In the future, we will study robustness of our control strategy to model uncertainties, and measurement error.


\section*{Code and data availability}%
The code and data to run the model identification (§ \ref{sec:model_description}), the linear regression (§ \ref{sec:model_description}), and the simulations (§ \ref{sec:numerical_results}) are available at https://github.com/diBernardoGroup/MPC-to-mitigate-COVID-19.


\bibliographystyle{IEEEtran}

\appendix
For the sake of completeness, we explore here the relationship between the stability condition in Proposition \ref{pro:stability_trajectory} and a recent condition derived independently in \cite{ma2021optimal}, when both are applied to a discrete-time SIS model (adapted from the continuous-time model adopted in \cite{ma2021optimal}).

Namely, the dynamics of the infected we consider is given by
\begin{align}\label{eq:SIS_model_discrete}
    \B{I}(t+1) = \big[ (1-\gamma){\C{I}} + \text{diag}(\B{1} - \B{I}(t))\beta \B{A} \big] \B{I}(t),
\end{align}
where $I_i \in [0, 1]$ is the portion of infected in region $i$, with $i \in \{1, \dots, M\}$,
$\B{I} \coloneqq [I_1 \ \cdots \ I_M]\T$, $0\le \beta \le 1$ is the infection rate, $0 \le \gamma \le 1$ is the recovery rate, $\B{A} \in \BB{R}^{M\times M}$ captures the rate at which infection flows between regions (also accounting for social distancing), with $\B{A} \succeq 0$, and  $\R{diag}(\B{v})$ is a diagonal matrix having on its main diagonal the vector $\B{v}$.

Next, we derive and compare two different upper bounds on the maximum eigenvalue of $\mathbf{A}$, that guarantee the stability of the infected dynamics in the origin.
The first bound, say (B1), is derived following the arguments in Proposition \ref{pro:stability_trajectory}, whereas the second, say (B2), is computed following the steps in \cite{ma2021optimal}.

\subsubsection*{Bound derived from Proposition \ref{pro:stability_trajectory}}

By using similar arguments as those used in Proposition \ref{pro:stability_trajectory}, it can be shown that, if there exists some matrix norm and some $c<1$ such that
\begin{equation}\label{eq:proof_step_03}
    \norm{(1 - \gamma){\C{I}} + \R{diag}(\B{1} - \B{I}(t)) \beta \B{A}} \le c, 
    \quad \forall t \in \BB{N}_{\ge 0},
\end{equation}
then $\norm{{\B{I}}(t)} \le c^{t-t_0} \norm{{\B{I}}(t_0)}$ for all $t \ge t_0$.
Letting $I^\R{m}(t) \coloneqq \min_i I_i(t)$, and applying the triangle inequality and Lemma \ref{lem:bounds_matrices_norms}, we find that  \eqref{eq:proof_step_03} is satisfied if 
\begin{equation}\label{eq:proof_step_02}
    \norm{\B{A}} \le \frac{1}{1 - I^\R{m}(t)}\frac{\gamma+c-1}{\beta}, \quad \forall t \in \BB{N}_{\ge 0},
\end{equation}
where we used the fact that the variables $I_i(t)$ are bounded between $0$ and $1$. 
As $\norm{\B{A}} \ge \lambda_{\max}({\B{A}})$, then there exists $p \in (0,1]$ such that $p \norm{\B{A}} = \lambda_{\max}(\B{A})$.
Thus, we rewrite \eqref{eq:proof_step_02} and obtain the first bound as
\begin{equation}\tag{B1}\label{eq:our_condition_sis}
    \lambda_{\max}(\B{A}) \le b_1 \coloneqq \frac{p}{1 - I^\R{m}(t)} \frac{\gamma + c - 1}{\beta}.    
\end{equation}

\subsubsection*{Bound derived from \cite{ma2021optimal}}

Following the arguments in \cite{ma2021optimal}, we let $\B{P} \coloneqq (1-\gamma){\C{I}} + \beta \B{A}$, noting that $\B{P} \succeq 0$.
Applying the Perron-Frobenius theorem \cite[Theorem~8.3.1]{horn2012matrix} to $\mathbf{P}\T$ yields the existence of an eigenvalue $\lambda_{\max}(\B{P})\ge 0$, with associated left eigenvector $\B{v}_{\max} \succeq 0$. 
Moreover,  
$\lambda_{\max}(\B{P})$ is such that $\lambda_{\max}(\B{P}) \ge |\lambda_i(\B{P})|$, for all other eigenvalues $\lambda_i$ of $\B{P}$.
Now, if
\begin{equation}\tag{B2}\label{eq:their_condition_sis}
    \lambda_{\R{max}}(\B{A}) \le b_2 \coloneqq \frac{\gamma + c -1}{\beta},
\end{equation}
(which constitutes the second bound), we consequently get $\lambda_{\R{max}}(\B{P}) \le c < 1$, and it is then immediate to show that this implies that $\B{0}$ is a stable equilibrium for the dynamics $\B{I}(t)$ given in (\ref{eq:SIS_model_discrete}).
Indeed, we have
\begin{equation*}
\begin{split}
    \B{v}_{\max}^{\T}\B{I}(t+1) &= \B{v}_{\max}^{\T}\big((1-\gamma){\C{I}} + \R{diag}(\B{1}-\B{I}(t))\beta \B{A}\big)\B{I}(t)\\
    &\le \B{v}_{\max}^{\T} \B{P}\B{I}(t)
    = \lambda_{\max}(\B{P}) \B{v}_{\max}^{\T} \B{I}(t)
    \le c \B{v}_{\max}^{\T} \B{I}(t),
    \end{split}
\end{equation*}
which means stability of $\B{I}(t)$ in $\B{0}$ recalling that $\B{I}(t) \succeq 0$ and $\B{v}_{\max} \succeq 0$.
Note that \eqref{eq:their_condition_sis} is equivalent to the stability condition in Proposition 3.1 in \cite{ma2021optimal} when applied to a discrete-time SIS network model.

\subsubsection*{Comparison}

Finally, we relate the two stability conditions \eqref{eq:our_condition_sis} and \eqref{eq:their_condition_sis}.
Clearly, when $I^\R{m}(t) \in (1-p, 1)$, $b_1 > b_2$ and our condition \eqref{eq:our_condition_sis} is less conservative than \eqref{eq:their_condition_sis}, while it is more conservative ($b_1 < b_2$) when $I^\R{m}(t) \in [0, 1-p)$.

\end{document}